\definecolor{REFCOLOR}{HTML}{B1063A}
\definecolor{HPIyellow}{rgb}{0.965, 0.659, 0}
\definecolor{HPIblue}{rgb}{0, 0.478, 0.62}
\definecolor{HPIorange}{rgb}{0.867, 0.38, 0.031}
\definecolor{HPIred}{rgb}{0.694, 0.024, 0.227}
\definecolor{HPIgreen}{rgb}{0, 0.698, 0.2}
\newcommand{\Sets}{\mathcal{S}}
\newcommand{\RLP}{Relaxed LP}
\newcommand{\IPS}{IP-Steiner}
\newcommand{\C}{C}
\newcommand{\iTrees}{\mathcal{T}_i}
\newcommand{\Expected}{\mathds{E}}
\newcommand{\bigO}{\mathcal{O}}
\newcommand{\ProblemName}[1]{\textsf{#1}}
\newcommand{\steinernumber}[1]{\sigma(#1)}
\newcommand{\depth}{\emph{depth}\xspace}
\crefname{claim}{claim}{claims}
\Crefname{claim}{Claim}{Claims}
\crefname{lemma}{lemma}{lemmas}
\Crefname{lemma}{Lemma}{Lemmas}
\begin{document}
\mainmatter              
\title{Optimal Approximations for the Requirement Cut Problem on Sparse Graph Classes}
\titlerunning{Optimal Approximations for the Requirement Cut Problem on Sparse Graph Classes}  
%
\author{Nadym Mallek \inst{1} \and Kirill Simonov\inst{2}}
\authorrunning{Nadym Mallek and Kirill Simonov} 
\institute{Hasso Plattner Institute, University of Potsdam, Potsdam, Germany\\
\email{nadym.mallek@hpi.de},\\
\and
Department of Informatics, University of Bergen, Norway\\
\email{k.simonov@uib.no}}

\maketitle              

\begin{abstract}
We study the \ProblemName{Requirement Cut} problem, a generalization of numerous classical graph partitioning problems including \ProblemName{Multicut}, \ProblemName{Multiway Cut}, \ProblemName{$k$-Cut}, and \ProblemName{Steiner Multicut} among others. Given a graph with edge costs, terminal groups $S_1, \ldots, S_g$ and integer requirements $r_1, \ldots, r_g$; the goal is to compute a minimum-cost edge cut that separates each group $S_i$ into at least $r_i$ connected components. Despite many efforts, the best known approximation for \ProblemName{Requirement Cut} yields a double-logarithmic $\bigO(\log(g)\cdot\log(n))$ approximation ratio as it relies on embedding general graphs into trees and solving the tree instance. 

In this paper, we explore two largely unstudied structural parameters in order to obtain single-logarithmic approximation ratios: (1) the number of minimal Steiner trees in the instance, which in particular is upper-bounded by the number of spanning trees of the graphs multiplied by $g$, and (2) the \depth of series-parallel graphs. Specifically, we show that if the number of minimal Steiner trees is polynomial in $n$, then a simple LP-rounding algorithm yields an $\bigO(\log n)$-approximation, and if the graph is series-parallel with a constant depth then a refined analysis of a known probabilistic embedding yields a $\bigO(\depth \cdot \log g)$-approximation on series-parallel graphs of bounded depth. Both results extend the known class of graphs that have a single-logarithmic approximation ratio.
\keywords{Requirement Cut, LP rounding, Randomized rounding, Spanning trees, Series-parallel graphs, Approximation algorithms, Graph partitioning}
\end{abstract}

\section{Introduction}
\label{sec:intro}

Graph partitioning problems are a central topic in combinatorial optimization, with applications such as from network design \cite{donde_identification_2005}, VLSI Layout \cite{cong_multilevel_2003}, image segmentation \cite{grady_isoperimetric_2006, boykov_fast_2001}, parallel computing \cite{bader_graph_2013} and even bioinformatics \cite{junker_analysis_2008}. In this paper, we study the \ProblemName{Requirement Cut} problem, a generalization of several well-known cut problems. Given an undirected graph $G = (V, E)$ with edge costs, a collection of $g$ terminal groups $ \Sets = (S_1, \dots, S_g) \subseteq V$, and a requirement $r_i > 1$ for each group, the goal is to compute a minimum-cost edge cut $C \subseteq E$ that separates each terminal group $S_i$ into at least $r_i$ components in $G'=(V,E\setminus C).$ 

The \ProblemName{Requirement Cut} problem is a natural extension of multiple classical problems that have been studied for decades. In the following, we present those problems, reference the paper that introduced it and summarize the current best general and parameterized approximation ratios: 
\begin{itemize}
    \item \ProblemName{Multicut} \cite{hu_multi-commodity_1963} : Given a set of terminal pairs $(s_1, t_1), \dots, (s_k, t_k)$, the goal is to find a minimum-cost edge cut that separates each pair. This corresponds to the \ProblemName{Requirement Cut} problem when each terminal group consists of exactly two vertices and has a requirement of $r_i = 2$. Garg et al. \cite{garg_approximate_1993} gives the best known approximation for general graphs with a $\bigO(\log(k))$-approximation ratio via LP rounding (region growing). Friedrich et al. give a constant approximation for series-parallel graphs \cite{friedrich_primal-dual_2022} and a more general  \cite{friedrich_approximate_2023} $\bigO(\log(tw))$ for graphs of treewidth $tw$.
    
    \item  \ProblemName{Multiway Cut} \cite{dahlhaus_complexity_1994}: Given a set of terminals $S = \{s_1, \dots, s_k\}$, the objective is to find a minimum-cost edge cut that separates all terminals into distinct components. This is a special case of \ProblemName{Requirement Cut} where there is only one group and $r = k$. Chopra et al. \cite{chopra_multiway_1991} studied the linear program of this problem in depth and got a $2-\frac{2}{k}$ approximation ratio. After several improvements and new techniques used \cite{calinescu_improved_2000, karger_rounding_2004, buchbinder_simplex_2018}, this chain culmianted with a 1.2965 approximation by Sharma and Vondràk \cite{sharma_multiway_2014} . This problem also has seen a relatively new variant in the past few years called the \ProblemName{Norm Multiway Cut} \cite{chandrasekaran_l_p-norm_2021, carlson_approximation_2023} which is a variant where the goal is to minimize the norm of the edges in the boundaries of $k$-parts of the graph.
    
    \item \ProblemName{Steiner-Multicut} \cite{klein_approximation_1997}: Given sets of terminals $\Sets = (S_1, \dots, S_k)$ each of size $(t_1, \dots, t_k)$, the goal is to find a minimum-cost edge cut that separates each set in at least 2 components. This corresponds to the \ProblemName{Requirement Cut} problem when each requirement is 2. This problem has particularly many applications has even been studied in a parametrized complexity setting \cite{bringmann_parameterized_2016}. Its best current approximation comes directly from the \ProblemName{Requirement Cut} and comes with a $\bigO(\log(k) \log(n))$ ratio. Moreover, it is a direct generalization of a less studied variant when $k=1$, the \ProblemName{Steiner-Cut} problem \cite{jue_near-linear_2019}. 
    
    \item \ProblemName{Multi-Multiway Cut} \cite{avidor_multi-multiway_2007}: A generalization of \ProblemName{Multiway Cut} where multiple terminal groups must be separated into connected components. This corresponds to the \ProblemName{Requirement Cut} problem when each terminal group has a requirement equal to its size. The best known approximation comes from the paper that introduced the problem \cite{avidor_multi-multiway_2007} and is $\log(2k)$. Slightly later two more papers have studied the problem under different angles, \cite{shuguang_multi-multiway_2009} by adding label edges and \cite{deng_multi-multiway_2013} studying the problem on bounded branch-width graphs.
    
    \item \ProblemName{k-Cut} \cite{goldschmidt_polynomial_1994} : The problem of partitioning a graph into at least $k$ connected components using a minimum-cost edge cut. This can be viewed as a \ProblemName{Requirement Cut} instance where there is only one terminal set covering all the vertices and its requirement $r = k$ ($k$ parameter given by the problem). First, the \ProblemName{3-cut} was introduced and solved in polynomial time by Hochbaum and Schmoys \cite{hochbaum_ov_1985}. It was then extended to \ProblemName{k-cut} problem by Goldsmith and Hochbaum \cite{goldschmidt_polynomial_1994} who gave a polynomial time algorithm for any fixed $k$. Later, by better understanding its parametrized complexity, Saran and Vazirani\cite{saran_finding_1995} found the current best approximation ratio $2-\frac{2}{k}$. More recently, Gupta et al. \cite{gupta_optimal_2021}, gave tight bounds for the run-time of computing an optimal solution.
    
    \item \ProblemName{Steiner k-Cut} \cite{chekuri_steiner_2006}: This is a generalization of the \ProblemName{k-Cut} problem, where $S_1 \subseteq V, |S_1| \geq k$ and the objective is to find a minimum cost set of edges whose removal results in at least $r_1 = k$ disconnected components, each containing a terminal. When $k =n$ this is exactly the \ProblemName{k-Cut} problem. The problem has been established and approximated best by Chekuri et al. \cite{chekuri_steiner_2006} to a $2-\frac{2}{k}$ ratio. It is worth noting that their work has established the Linear Program formulation used to solve the \ProblemName{Requirement Cut} Problem introduced later.
\end{itemize}

\begin{figure}[ht]
\small
	\begin{center}
		\begin{tikzpicture}
	 \node at (0,0){\includegraphics[scale = 0.8]{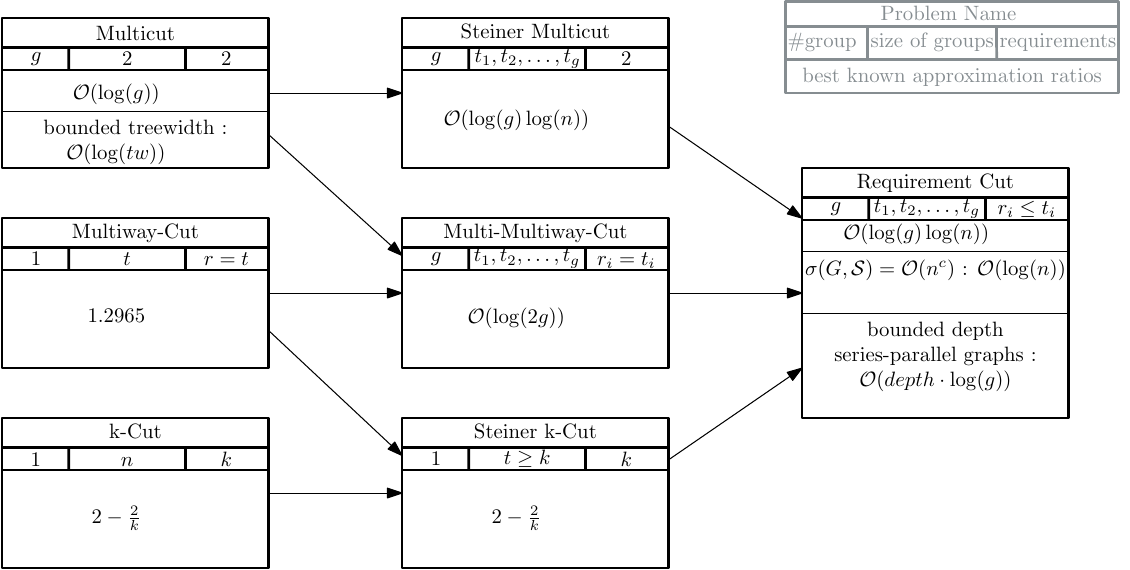}};				
            \node at (-5.2,2.57){\cite{garg_approximate_1993}};
            \node at (-5.1,1.78){\cite{friedrich_approximate_2023}};
            \node at (-5.35,-0.45){\cite{sharma_multiway_2014}};
            \node at (-5.42,-3.15){\cite{saran_finding_1995}};
            \node at (0.6,2.23){\cite{gupta_improved_2010}};
            \node at (0.2,-0.45){\cite{avidor_multi-multiway_2007}};
            \node at (-0.1,-3.15){\cite{chekuri_steiner_2006}};
            \node at (6,0.68){\cite{gupta_improved_2010}};
            \node at (5.1,-0.15){(\Cref{sec:GraphsBST})};
            \node at (5.1,-1.6){(\Cref{sec:depthspg})};
		\end{tikzpicture}   
	\end{center}
	\caption{Hierarchy of graph cut problems captured by the Requirement Cut framework putting our results in perspective. Boxes contain parameters, and best known approximation results for general graphs (and for more particular cases when specified). Arrows denote special cases of the problems (i.e., the source is a particular case of the target).}
	\label{fig:cutproblems}
\end{figure}

These problems have been studied mostly through the lens of approximation algorithms. Some of them have received particular attention on specific classes of graphs with parameters such as treewidth, branchwidth or pathwidth.

The hardness of the \ProblemName{Requirement Cut} can be shown through any of these problems since they are all hard and are particular cases of it. A particular reduction that helps us with bounding the approximability on trees is a reduction from \ProblemName{Set Cover} taken from \cite{nagarajan_approximation_2010}. Consider a reduction from \ProblemName{Set Cover} to the \ProblemName{Requirement Cut} problem on trees. We construct a star graph where each leaf node corresponds to a set in the \ProblemName{Set Cover} instance, and the center node of the star connects to each of these leaves via an edge of cost $1$. For each element $j$ in the universe, we define a group consisting of the root and all leaf nodes whose corresponding sets contain element $j$. We assign a requirement of $2$ to each group.
In this construction, selecting an edge in the tree (i.e., choosing a set in the Set Cover instance) ensures that the corresponding group is cut (i.e., its requirement is satisfied) if and only if that set covers one of the group's elements. Hence, there is a one-to-one correspondence between feasible solutions of the \ProblemName{Requirement Cut} instance and valid set covers, which cost the same. Feige et al. \cite{feige_threshold_1998} showed that \ProblemName{Set Cover} can not be approximated in polynomial time to a $\Omega(\log(n))$ ratio where $n$ is the size of the universe to cover. It follows that \ProblemName{Requirement Cut} on trees is hard to $\Omega(\log(g))$- approximate where $g$ is the number of groups.

Now that we have seen that the best approximation possible on trees for our problem is $\log(g)$, let us put the results of our paper in context. Note that, the reduction from \ProblemName{Set Cover} to \ProblemName{Requirement Cut} on trees works even when all the requirements are 2 (i.e. the \ProblemName{Requirement Cut} instance is a \ProblemName{Steiner Multicut} instance). In 1997, Klein, Plotkin, Rao and Tardos \cite{klein_approximation_1997} presented an algorithm that yields $\bigO(\log^3(gt))$-approximation with $t = \max(t_i)$ (the sizes of the sets) for the \ProblemName{Steiner Multicut} problem. This ``triple log'' approximation ratio remained the best for this problem until 2010. In 2010, the \ProblemName{Requirement Cut} problem was introduced by Nagarajan and Ravi \cite{nagarajan_approximation_2010} who proved a $\bigO(\log(n) \cdot \log(gR))$-approximation with $R = \max(r_i)$. This approximation became, at the same time, the best approximation ratio for the \ProblemName{Steiner Multicut} problem. Their bound was later improved by the same set of authors along with Gupta \cite{gupta_improved_2010} that got the approximation ratio down to $\bigO(\log(n) \cdot \log(g))$, effectively removing all dependency on the requirements. Both involve very similar algorithmic steps:
\begin{enumerate}
    \item Finding a good approximation algorithm for trees, $\log(gR)$ for \cite{nagarajan_approximation_2010} and $\log(g)$ for \cite{gupta_improved_2010}. 
    \item Using Fakcharoenphol, Rao and Talwar (FRT) \cite{fakcharoenphol_tight_2004}, they embed the graph into a tree and pay a $\log(n)$ factor for it.
    \item They use the the first approximation on the resulting tree and get their $\bigO(\log(gR) \cdot \log(n))$ or $\bigO(\log(g) \cdot \log(n))$-approximation.
\end{enumerate}
It is also worth noting that Gupta et al. \cite{gupta_improved_2010} technically has a slightly better $\bigO(\log(k) \cdot \log(g))$-approximation with $k$ being the total number of terminals. They obtain this improvement through something that they call a slight generalization of FRT \cite{fakcharoenphol_tight_2004} in which they embed into a tree only the terminal vertices and hence only pay $\log(k)$ and not $\log(n)$. But this remains secondary in their paper since their main contribution was closing the gap between the $\log(gR)$ approximation on trees and the known lower bound of $\log(g)$ making it tight.

For the past few years, one goal of the community has been to beat those ``double log'' approximations with ``single log'' ones. Since we know that embedding into trees has a $\log(n)$ cost, (surprisingly) even for treewidth 2 graphs \cite{gupta_cuts_2004}, and that a \ProblemName{Requirement Cut} problem instance on trees can not be approximated better than with a $\log(g)$ factor \cite{feige_threshold_1998}, we know that continuing this path of using tree embeddings will not result into a ``single log'' approximation. In this paper, we try something different over two different routes that we present in the next section.  
\subsection{Our Contributions}

We address the following question: \emph{For which classes of graphs does the \ProblemName{Requirement Cut} problem admit optimal (single-log) approximation algorithms?} To this end, we identify two structural parameters—(1) the \textbf{number of spanning or minimal Steiner trees} of the input graph and (2) the \textbf{depth} of series-parallel graphs. We show that both these relatively unstudied parameters offer better approximation for the  \ProblemName{Requirement Cut} problem. 

First, we demonstrate that graphs with a polynomial number of spanning trees (or, more generally, \textbf{number of minimal Steiner trees}) admit an $\bigO{\log (n)}$-approximation via a randomized LP-rounding scheme similar to the one given by Nagarajan and Ravi  \cite{nagarajan_approximation_2010}.

Second,  we show that for series-parallel graphs of constant \depth, a refined analysis of Emek and Peleg's \cite{emek_tight_2010} existing embeddings method yields an $\bigO{\log(g)}$-approximation, extending the known $\bigO{log(g)}$-approximations for trees to a broader class of graphs. 

The results broaden the landscape of instances that can be approximated with a ``single logarithmic'' factor and shed light on two unstudied parameters until now, $depth$ of series-parallel graphs and the order of magnitude of the number of spanning trees in a graph.

\begin{itemize}
    
    \item \textbf{Instances with a Polynomial Number of Steiner Trees}: A key parameter in our analysis of \Cref{alg:RRRC} of \Cref{sec:MainAlgorithm} depends on the instance. We refer to it as $\steinernumber{G,\Sets}$ for a given \ProblemName{Requirement Cut} instance. This function is equal to the number of distinct \hyperref[def:minsteiner]{Minimal Steiner Trees} (Steiner Trees for which all the leaves are terminal). This quantity is hard to compute \cite{valiant_complexity_1979} and so we will almost always upper-bound it by $\tau(G)\cdot g$ with $\tau(G)$ being the number of distinct spanning trees of the graph which is computable in polynomial time using Kirchhoff's Matrix Tree Theorem initially stated in \cite{kirchhoff_ueber_1847}. We show that if $\tau(G) = \bigO(n^{c_1})$ then necessarily $\steinernumber{G,\Sets} = \bigO(n^{c_2})$ when $c_1$ and $c_2$ are constants. 

    \item \textbf{Bounded-Depth Series-Parallel Graphs}: Series-parallel graphs are well known for their recursive structure. They are often studied as a generalization of trees since they are, for example, representative of treewidth 2 graphs (trees have treewidth 1) and also are exactly $K_4$-minor-free graphs (trees are $K_3$-minor-free). In our paper,  we introduce their \textbf{depth} as a key parameter. This parameter has never been studied to the best of our knowledge and we show a $\bigO(depth\cdot \log(g))$-approximation for the \ProblemName{Requirement Cut} problem on series-parallel graphs. Extending the class of graphs that have a $\bigO(\log(g))$-approximation to the series-parallel graphs with any constant $depth$. This class of graph can additionally be seen as a generalization of \textit{melonic graphs} which are used in other fields such as physics.

\end{itemize}
In \Cref{sec:GraphsBST} we will give several classes of graphs and instances for which our results apply. Afterwards in \Cref{sec:MainAlgorithm} we give, prove and analyze our algorithm regarding graphs of polynomial number of Steiner trees. This comes with a proof about the Linear Program formulation of the \ProblemName{Requirement Cut} problem that is of its own interest (\Cref{lem:ILP}). Finally in \Cref{sec:depthspg}, we refine Emek and Peleg's \cite{emek_tight_2010} analysis to adapt it to the depth of series-parallel graphs. This approach provides an alternative to the known $\log(n)$ embeddings for series-parallel graphs. Due to space restrictions, some technical proofs are moved to the appendix; such statements are marked with a $\star$.

\section{Preliminaries}
\label{sec:prelim}
\begin{definition}[Cut]
Let $G = (V, E)$ be an undirected graph. A \emph{cut} in $G$ is a set of edges $C \subseteq E$ such that removing the edges in $C$ disconnects some specified structure in the graph. We say that a cut \emph{separates} a set of vertices $S \subseteq V$ if the vertices in $S$ lie in at least two distinct connected components of $G \setminus C$.
\end{definition}

\begin{definition}[Requirement Cut Instance]
\label{def:RCI}
A \emph{Requirement Cut} instance is defined by an undirected graph $G = (V, E)$, a cost function $c: E \rightarrow \mathbb{R}_{\geq 0}$, a collection of terminal sets $\Sets = \{S_1, \ldots, S_g\}$ with $S_i \subseteq V$, and integer requirements $ r_1, \ldots, r_g \geq 2$. The goal is to find a minimum-cost edge set $C \subseteq E$ such that in the graph $G' = (V, E \setminus C)$, each terminal set $S_i$ is partitioned into at least $r_i$ connected components.
\end{definition}
\begin{definition}[Steiner Tree]
Let $G = (V,E)$ and $\Sets = \{S_1, \ldots, S_g\} \subseteq V$ a set of terminal sets. A \emph{Steiner tree} for a terminal set $S$ is a minimally connected subgraph $T \subseteq G$ that spans all vertices in $S$ and is called an $S$-Steiner tree. $T$ may include additional non-terminal vertices, called \emph{Steiner vertices}. An edge connecting two terminals (resp. Steiner) vertices may be called \emph{terminal edge} (resp. \emph{Steiner edge}). 
\end{definition}

Additionnaly we define the length of a Steiner tree $T$ as  $\sum_{e \in T} d_e$ for a given length function $d$ and $d_e$ the length of edge $e$. 

\begin{definition}[Minimal Steiner Tree]
\label{def:minsteiner}
A Steiner tree is said to be minimal if all its leaves are terminal vertices. The number of distinct minimal Steiner trees in a graph $G = (V,E)$ with terminal sets $\Sets = \{S_1, \ldots, S_g\}$, will be noted as $\steinernumber{G,\Sets}$. 
\end{definition}

With this definition of Minimal Steiner tree, let us prove the following claim:

\begin{theoremEnd}[normal]{claim}[]
\label{clm:spann-stein}
In a graph \(G = (V, E)\) with \(\Sets = \{S_1, \ldots, S_g\}\) terminal sets, we have $\steinernumber{G,\Sets} \leq g \cdot \tau(G)$. With $\tau(G)$ the number of distinct spanning trees in $G$.
\end{theoremEnd}
\begin{proofEnd} 
Let \(T\) be a minimal Steiner tree of set \(S_k\) for some \(1 \leq k \leq g\). By adding edges and vertices to \(T\) until it covers \(V(G)\), one can create at least one spanning tree per minimal Steiner tree. Also, two distinct minimal Steiner trees for the same terminal set cannot extend to the same spanning tree in \(G\), as they differ in at least one edge. Therefore, there exist at most \(\tau(G)\) minimal Steiner trees for each terminal set \(S_i \in \Sets\).
\end{proofEnd}

\begin{definition}[Series-Parallel Graphs]
\label{def:spGraphs}
A \emph{series--parallel graph} (SP graph) is any graph that can be obtained from a single edge $K_2$ by a finite sequence of the following operations:
\begin{itemize}
    \item \textbf{Parallel composition:} Replace an edge $(u,v)$ into a set of two parallel edges that connect $u$ and $v$.
    \item \textbf{Series composition:} Replace an $(u,v)$ into a degree 2 vertex connected to $u$ and $v$.
\end{itemize}

\end{definition}

\begin{definition}[Composition Trace]
\label{def:comptrace}
Let $G$ be a series-parallel graph constructed via a sequence of series and parallel compositions. The \textbf{composition trace} of $G$ is a directed tree $T$ whose:
\begin{itemize}
    \item leaves correspond to the individual edges of $G$,
    \item internal nodes represent subgraphs formed by (series or parallel) composition of their children
    \item root corresponds to $G$ itself.
\end{itemize}
We group consecutive series or parallel compositions into single nodes, so that every path from a leaf to the root alternates between series and parallel composition steps. A node in $T$ may have arbitrarily many children.
\end{definition}

\begin{definition}[Depth of a Series-Parallel Graph]
\label{def:depth}
Let $G$ be a series-parallel graph and let $T$ be its composition trace. The \emph{depth} of $G$, is defined as the depth of the tree $T$, that is, if $l$ is the length of the longest path from a leaf (corresponding to a single edge) to the root (corresponding to $G$ itself): \depth $= l- 1$ .
\end{definition}

For example, a single edge would be a series-parallel graph of depth 0, a path or a many parallel edges would have depth 1 and a composition of parallel paths would have depth 2. This means that this definition makes each class of bounded depth series-parallel graphs an infinite class of graphs.

\section{Graphs with Polynomial Number of Steiner Trees}
\label{sec:GraphsBST}
In this section, we will see for which graphs our \Cref{alg:RRRC}, introduced later in \Cref{sec:MainAlgorithm}, is more efficient than the state of the art. First, we will comment on the link between the number of spanning trees and the number of Steiner trees, and then we will also show that the number of spanning trees is not really limiting our algorithm and that there are other graphs where our algorithm provides an efficient approximation as well. 
\begin{figure}[ht]
\small
	\begin{center}
		\begin{tikzpicture}
	 \node at (0,0){\includegraphics[scale = 0.8]{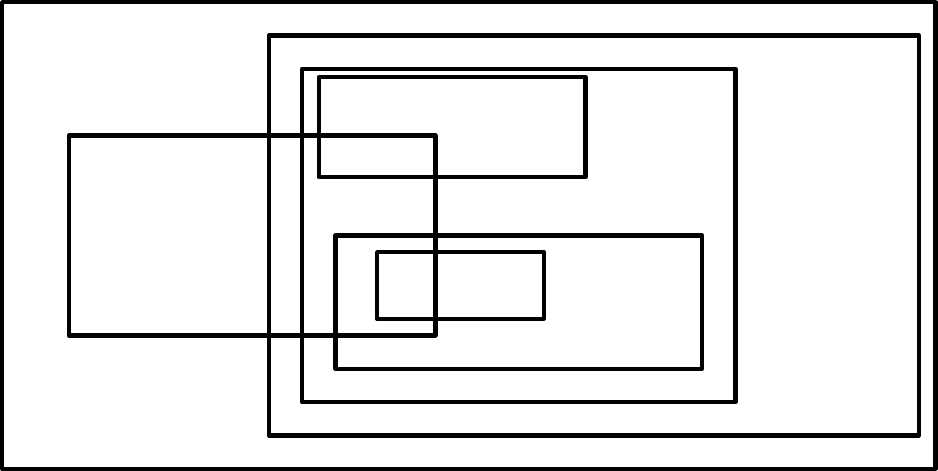}};				
            \node at (-4.1,2.9){All Requirement Cut Instances};
            \node at (-3.9,0.5){Bounded Depth};
            \node at (-3.9,0.1){Series-Parallel};
            \node at (-3.9,-0.3){(\Cref{sec:depthspg})};
            \node at (-0.1,1.9){Few Short Cycles};
            \node at (-0.1,1.5){(\Cref{sec:GraphsBST})};   
            \node at (0.4,-1.6){Bounded Feedback Edge Set};
            \node at (-0.,-0.7){Tree};
            \node at (2.3,0.5){$\tau(G) = \bigO(n^c)$};
            \node at (4.82,0.7){$\steinernumber{G,\Sets} = \bigO(n^c)$};
            \node at (4.82,0.3){(\Cref{sec:GraphsBST})};            
		\end{tikzpicture}  
	\end{center}
	\caption{Overview of graph classes and instances for which the \ProblemName{Requirement Cut} problem admits a single-logarithmic approximation in this paper.}
	\label{fig:instances-RC}
\end{figure}
\subsection{Graphs with Polynomial number of Spanning Trees} 

A key parameter in our analysis is the number of spanning trees in a graph, which we denote as $\tau(G)$. By Kirchhoff’s Matrix Tree Theorem \cite{kirchhoff_ueber_1847}, $\tau(G)$ can be computed as the determinant of any cofactor of the graph Laplacian matrix of the graph. Equivalently, if $\lambda_2, \lambda_3, \dots, \lambda_n$ are the nonzero eigenvalues of $L$, then $\tau(G) = \frac{1}{n} \prod_{i=2}^{n} \lambda_i$. In our setting, we focus on graphs where $\tau(G)$ is polynomial in $n$, i.e., $\tau(G) = \bigO(n^c)$ for some constant $c$. 

The number of spanning trees multiplied by $g$ (the number of terminal sets of our \ProblemName{Requirement Cut} problem instance) is a natural upper bound to the number of distinct minimal Steiner trees in that same instance. So naturally, graphs with a polynomial number of spanning trees satisfy our condition. First, let us now see two examples of graph classes that satisfy our spanning tree condition. Second we will also illustrate the kind of instances where the number of Steiner-tree is polynomial and for which any support graph, regardless of number of spanning trees, would admit a  ``single-log'' approximation.

\subsubsection{Graphs with Small Feedback Edge Sets} 

One prominent category of graphs with polynomially many spanning trees is the graphs that have a small, or bounded, feedback edge set. The feedback edge set is defined as the minimum number of edges whose removal results in an acyclic graph. Note that the feedback edge set number of a connected n-vertex and m-edge graph G is always $m - n + 1$ and a minimum feedback edge set can be determined by the computation of a spanning tree in $\bigO(n+m)$ time via depth-first search. Also note that this class of graphs contains graphs with bounded number of cycles.

Back to our case, if the size of a graph's feedback edge set is bounded by a constant $k$, then the number of spanning trees is $\binom{n-1+k}{k} = \bigO(n^k)$, and hence polynomially bounded in the size of the graph.

\subsubsection{Graphs with few Short Cycles}
Graphs with short cycles, meaning each cycle contains up to $k$ vertices and the number of cycles is of order $\bigO(\log(n))$, have a polynomial number of spanning trees. In such a graph, a spanning tree is obtained by removing $1$ edge for each cycle making the number of spanning trees in the order of $\bigO(k^{\log(n)}) = \bigO(n^{\log(k)})$ which is polynomial for fixed k. 

\begin{figure}[ht]

	\begin{center}
		\begin{tikzpicture}
	   \node at (0,0){\includegraphics[scale = 1]{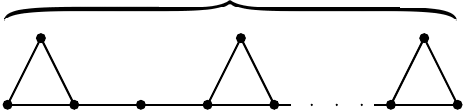}};		     \node at (0,1.5){$\log(n)$ cycles of bounded length};	           
		\end{tikzpicture}  
	\end{center}
	\caption{This graph has $\bigO(\log(n))$ cycles of length 3. each spanning tree chooses 1 edge per cycle to remove, hence yielding a number of spanning trees in $\bigO(3^{\log(n)})$ which is polynomial $\bigO(n^{\log(3)})$.}
	\label{fig:shortcycles}
\end{figure}


\subsection{Instances of Polynomial number of Steiner Trees}

The Steiner trees we are interested in for our algorithm are minimal Steiner trees, as given by~\Cref{def:minsteiner}. We denote the number of such trees for a given instance by $\steinernumber{G,\Sets}$. This value depends on the terminal sets of the instance as well as on the graph, and is hence more specific than its upper bound $\tau(G) \cdot g$. Unfortunately, it is hard to compute. Still, it can provide better approximation for some type of instances where the $\tau(G)$ would yield a value that is super-polynomial in $n$.

In those instances the whole graph may have a super-polynomial number of spanning trees but $\steinernumber{G,\Sets}$ is in $poly(n)$. This can occur when the terminal sets are fully located in subgraphs that each have a polynomial number of spanning trees and are connected to the rest of the graph by exactly one edge.

\begin{figure}[ht]

	\begin{center}
		\begin{tikzpicture}
	   \node at (0,0){\includegraphics[scale = 1]{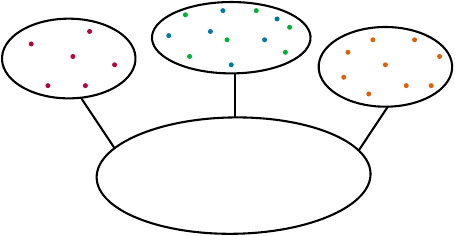}};		     
        \node at (0.1,-1){$G = (V,E)$};
        \node at (-2.8,2){$H_1 \supseteq (\textcolor{HPIred}{S_1})$};
        \node at (0,2.3){$H_2 \supseteq (\textcolor{HPIgreen}{S_2}, \textcolor{HPIblue}{S_3})$};
        \node at (2.6,2){$H_3 \supseteq (\textcolor{HPIorange}{S_4})$};
		\end{tikzpicture}  
	\end{center}
	\caption{Instance where terminal groups reside in minimally connected subgraphs. Here, $\tau(H_i) = poly(|H_i|), i\in\{1,2,3\}$ would suffice to have a $\log(n)$ approximation for the instance.}
	\label{fig:PolySteinerFig}
\end{figure}

\section{Requirement Cut Algorithm}
\label{sec:MainAlgorithm}
\subsection{Linear Program}
We work with essentially the same $LPs$ as \cite{nagarajan_approximation_2010} and \cite{gupta_improved_2010}. However, in \Cref{lem:ILP}, we will provide additional insights about these linear programs, compared to these previous works. This is central to our analysis as it allows us to guarantee that our rounded solution is indeed a valid solution to the \ProblemName{Requirement Cut} problem. Both these $LPs$ work on slightly modified graphs w.r.t $G = (V,E)$. The first $LP$, \IPS{}, calls for a complete graph, so we add 0-cost edges to the graph until it is a clique, we call this graph $G_V$. The second, \RLP{}, also calls for extra edges only between terminals of the same group. We shall call this modified graph, where $S_i$ is a clique with the addition of 0-cost edges, $G_{S_i} = (V, E_{S_i})$. 
\[
\begin{aligned}
(\text{IP-Steiner}) \quad \min \quad & \sum_{e \in E} c_e d_e \\
\text{s.t.} 
\quad & \sum_{e \in \iTrees} d_e \geq r_i - 1 \quad && \forall \iTrees: S_i\text{-Steiner tree in } G_V, \quad \forall i = 1 \dots g \\
& d_e \in \{0,1\} \quad && \forall e \in E
\end{aligned}
\]
\[
\begin{aligned}
(\text{Relaxed LP}) \quad \min \quad & \sum_{e \in E} c_e d_e \\
\text{s.t.} \quad & \sum_{e \in \iTrees} d_e \geq r_i - 1  \quad \forall \iTrees: \text{spanning tree in } G_{S_i}, \forall i = 1 \dots g \\
& d_{\{u,w\}} \leq d_{\{u,v\}} + d_{\{v,w\}} \quad  \forall u,v,w \in V \\
& 0 \leq d_{\{u,v\}} \leq 1 \quad  \forall u,v \in V
\end{aligned}
\]

\IPS{} is an exact formulation of the \ProblemName{Requirement Cut} problem. However, we can notice that it is not clear why with only the constraint on the number of cut edges per Steiner tree, the solution would be a solution to the \ProblemName{Requirement Cut} problem. In \Cref{lem:ILP} we precisely show the role of the completeness of the graph in \IPS{}. In fact, without the completeness of the graph in \IPS{} (or of the terminal sets in \RLP{}), Steiner trees could have a long $d$ length without satisfying the requirement. For example,  take a path with any number of Steiner vertices in the middle and an arbitrary number of terminal vertices on each side. This graph could be cut $r-1$ times satisfying the constraint of the $LP$ while having its terminals lying only in 2 instead of $r$ components.

The algorithm we present in this section is a randomized rounding of the \RLP{} solution which can be solved by an ellipsoid method, using a Minimum Spanning Tree (MST) algorithm as an oracle. It ensures that each and every Steiner tree is cut at least $r - 1$ times. This alone does not suffice for proving that it returns a valid solution. Actually in  \cite[Claim 2]{nagarajan_approximation_2010} Nagarajan and Ravi use a preprocessing of their graph allowing them to have a relation between the number of edges removed from the graph and the number of terminal components created. This same claim is used in Gupta, Nagarajan and Ravi \cite[Claim 3]{gupta_improved_2010}. Both their approaches can use this claim because they work on trees. We need something stronger since we are working with a more general class of graphs. Hence, before presenting our algorithm we will show an equivalence (under the condition 
of 0-cost edges added of our $LP$) between the two statements ``All terminals of a set $S_i$ are in $r_i$ components'' and ``All spanning trees in $G_{S_i}$ are cut at least $r_i - 1$ times.''. Also, just like the two papers mentioned, let $d^*$ be the solution to \RLP{}, we work with $d_{\{u,v\}} = \min(2\cdot d^*_{\{u,v\}}, 1)$ which is also a metric since $d^*$ is one.
\begin{claim}[\cite{nagarajan_approximation_2010}] 
For any group $S_i, i\in (1,\dots, g)$, the minimum Steiner tree on $S_i$ under metric $d = \min(2\cdot d^*, 1)$ has length at least $r_i - 1$.
\end{claim}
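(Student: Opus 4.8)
The plan is to lower-bound the minimum $S_i$-Steiner tree by playing it against the spanning-tree constraints satisfied by $d^*$. Feasibility of $d^*$ for \RLP{} means that every relevant spanning tree has $d^*$-length at least $r_i-1$; read on the terminals, this says that the minimum spanning tree of $S_i$ in the metric $d^*$ has length at least $r_i-1$. What we must actually control, however, is a \emph{Steiner} tree, which may route through non-terminal vertices and can therefore be strictly cheaper than any spanning tree on the terminals alone. Closing exactly this gap is the purpose of the factor $2$ in the definition $d=\min(2d^*,1)$, so the whole argument will hinge on balancing a ``$\tfrac12$'' coming from metric geometry against the ``$2$'' put in by hand.

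First I would record the standard metric inequality relating spanning trees and Steiner trees: for any metric $\rho$ on $V$ and any terminal set $S$, the minimum spanning tree of $S$ in the complete graph with edge lengths $\rho$ is at most twice the minimum $S$-Steiner tree under $\rho$. The proof is the usual doubling argument: duplicate every edge of an optimal $S$-Steiner tree to obtain an Eulerian multigraph, take an Euler tour, and shortcut past repeated and non-terminal vertices using the triangle inequality of $\rho$; shortcutting never increases the length and produces a Hamiltonian path, hence a spanning tree, on $S$ of total length at most $2\cdot \mathrm{St}_S(\rho)$. Thus $\mathrm{St}_S(\rho)\ge \tfrac12\,\mathrm{MST}_S(\rho)$ for every metric $\rho$, which is precisely the direction of inequality I need.

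Next I would combine this with the scaling. Applying the inequality to $\rho=d^*$ and invoking the \RLP{} lower bound gives $\mathrm{St}_{S_i}(d^*)\ge \tfrac12\,\mathrm{MST}_{S_i}(d^*)\ge \tfrac12\,(r_i-1)$. Because Steiner-tree length is linear in the metric, the doubled metric satisfies $\mathrm{St}_{S_i}(2d^*)=2\,\mathrm{St}_{S_i}(d^*)\ge r_i-1$, so the factor $2$ cancels exactly the loss incurred when passing from spanning trees to Steiner trees. If no truncation were present, this would already finish the proof.

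The remaining and, I expect, genuinely delicate step is to show that capping the doubled metric at $1$, i.e. replacing $2d^*$ by $d=\min(2d^*,1)$, does not destroy the bound, since $d\le 2d^*$ pointwise and a priori the minimum Steiner tree could only get shorter. The point to exploit is that an edge is capped only when $d^*_e\ge\tfrac12$, in which case $d_e=1$ already contributes a full unit. On a minimizing Steiner tree $T$ under $d$ I would split its edges into capped ($d_e=1$) and uncapped ($d_e=2d^*_e$) ones and bound the two contributions separately: the uncapped part behaves exactly as in the untruncated argument, while each capped edge, contributing $1$, must be matched against the fact that all pairwise $d^*$-distances are at most $1$ and that splitting $S_i$ into $r_i$ parts forces $|S_i|\ge r_i$ and hence at least $r_i-1$ tree edges. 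The goal is to show that any capping ``loss'' $2d^*_e-1$ on a saturated edge is absorbed by the surplus the doubling created elsewhere, so that $\min(\cdot,1)$ never pushes the total below $r_i-1$. This interaction between the doubling and the truncation — together with making precise that the \RLP{} spanning-tree constraints indeed bound the terminal spanning trees we use in Step~2 — is the part of the argument that must be carried out with care.
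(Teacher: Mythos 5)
The paper does not prove this claim---it imports it verbatim from \cite{nagarajan_approximation_2010}---so your attempt has to be judged against the standard argument from that reference. Your first two steps are exactly right and are the intended route: the Euler-tour/shortcutting bound $\mathrm{St}_{S_i}(\rho)\ge\tfrac12\,\mathrm{MST}_{S_i}(\rho)$, combined with the LP constraint that terminal spanning trees of $S_i$ have $d^*$-length at least $r_i-1$, gives $\mathrm{St}_{S_i}(2d^*)\ge r_i-1$. But you then explicitly leave the truncation step---the only nontrivial part of the claim---unproven, and the plan you sketch for it (matching each capped edge against ``$|S_i|\ge r_i$ forces $r_i-1$ tree edges'' and an unspecified ``surplus from the doubling'') does not obviously close; as written it is not a proof.

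The missing idea is to run the doubling argument \emph{after} deleting the capped edges, not before. Let $T$ be a minimum $S_i$-Steiner tree under $d$ and let $F=\{e\in T: 2d^*_e\ge 1\}$ be the capped edges, each contributing exactly $1$ to $d(T)$. Removing $F$ splits $T$ into at most $|F|+1$ components; inside each component every edge satisfies $d_e=2d^*_e$, so the Euler-tour shortcut yields a terminal-spanning path $P_j$ with $d^*(P_j)\le 2d^*(T_j)=d(T_j)$. Discard terminal-free components and reconnect the at most $|F|+1$ terminal groups into a single spanning tree $T'$ of $S_i$ in the metric completion using at most $|F|$ extra edges, each of $d^*$-cost at most $1$ by the LP cap $d^*\le 1$. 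Then $d^*(T')\le\sum_j d(T_j)+|F|=d(T)$, while LP feasibility gives $d^*(T')\ge r_i-1$, which is the claim. Note that each unit paid for a reconnecting edge is charged to the unit that the corresponding capped edge already contributes to $d(T)$---this is the precise sense in which the truncation ``loss'' is absorbed, and it is a different mechanism from the one you propose. (You are also right to flag that the \RLP{} constraint as printed ranges over spanning trees of all of $G_{S_i}$; the argument needs the Nagarajan--Ravi reading, i.e.\ spanning trees on the terminal set $S_i$ in the metric completion.)
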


\begin{theoremEnd}[normal]{lemma}[]
\label{lem:ILP}
Let $G = (V, E)$ a graph, $\Sets = \{S_1, \ldots, S_g\}$ with $S_i \subseteq V$ a collection of terminal sets, and integer requirements $ r_1, \ldots, r_g \geq 2$. Let each $G_{S_i}$ be the graph $G$ where 0-cost edges are added between nonadjacent pairs inside $S_i$ making $S_i$ a clique. Let $d: E \rightarrow \{0,1\}$ be a metric such that for every $S_i$-Steiner tree $T$, $d(T) \geq r_i - 1$. Then it is equivalent to say that all spanning trees over $G_{S_i}$ are cut at least $r_i - 1$ times and to say that the vertices of $S_i$ lie in $r_i$ components.
\end{theoremEnd}
\begin{proofEnd} 
For the purpose of this proof we drop the indices by considering the graph $G_S = (V, E_S)$ with $S$ being a terminal set with requirement $r$. It is enough to show the property for one terminal set since more terminal sets only cut the graph in more components by cutting more edges. We call an edge $e$ with $d(e) = 1$ a cut-edge. We call a component containing a terminal vertex, a terminal component. A non-terminal component is a component that does not contain a terminal vertex. 

Let $T_0$ be an $S$-Steiner tree in $G$ such that $d(T_0) = r-1$ (without loss of generality since more cut edges only work in our favor) and $S$ is in less than $r$ terminal components of $T_0 = (K_1,\dots,K_r)$. Let $K^* \in T_0$ be a non-terminal component. Let $\delta(K^*)$ be the number of \textit{cut-edges} that have exactly one end-point in $K^*$. $K^*$ \textbf{is not} a leaf component since $T_0$ is a minimal Steiner tree and hence $K^*$ has at least 2 \textit{cut-edges} incident to it, $\delta(K^*) \geq 2$. Also, let $(t_1, \dots, t_{\delta(K^*)}) \subseteq T_0\setminus K^*$ be the terminal \textbf{subtrees} (subtrees can contain several components) adjacent to $K^*$ through cut-edges $(e_1, \dots, e_{\delta(K^*)})$. Let $x\in t_1$ and $y\in t_2$ two terminal vertices and $e^* = (x,y)$ the 0-cost edge in $G_S$ connecting them. We call it the shortcut-edge.
 
With all this, we can \textit{shortcut} $K^*$ and reconstruct a tree $T_1$ that was considered by the LP and contains $e^*$ (the shortcut edge). We remove from the cut $e_1$ and $e_2$ the edges connecting $K^*$ to $t_1$ and $t_2$, and we add into the cut $e^*$ (necessary member of the cut since it costs 0 and achieves a desired separation) and an arbitrary edge of $T_1$, for example the closest to $e_1$ or $e_2$, say $e$. Now, we have $T_1$ another $S$-Steiner tree with $T_1 = \{e^*,e\}\cup T_0\setminus K^*$. By our construction so far $d(T_1) = d(T_0) = r-1$. More generally for any tree $T_i$ this procedure allows us to remove a non-terminal component and prove the existence of $T_{i+1}$ with a new set of edge-cuts. 

We effectively removed $K^*$ by short cutting it and we now have 2 cases:
\begin{enumerate}
    \item The number of non-terminal components is reduced. If it is at 0, we are done since there are at least $r$ components.
    \item The number of non-terminal components is the same. 
\end{enumerate} 
Indeed, the number of non terminal components cannot increase and can at most create one. Case $(2)$ simply means that the new cut-edge $e$ cut a terminal component into a non-terminal part and a terminal part. However, in that case, we are sure to either get closer to decreasing the number of non-terminal components since we shrink the size of the terminal part of that terminal component or to actually decrease it that number.

In some cases, the short cutting may shortcut more than one non-terminal components, in those cases, we reallocate all the cut edges of shortcut components (cut edges that are not in $T_{i+1}$) the same way we did for $e$ by shifting them to the nearest not cut edge of $T_{i+1}$.

One last thing that remains to show is that we have enough shortcuts to shortcut all non-terminal components that may appear. Note that there at most $r-2$ non terminal components since at least $2$ components contain terminals. Also note that $r$ is bounded by the number of terminal vertices by definition of the \ProblemName{Requirement Cut} Problem. To shortcut those $r-2$ non terminal components, we have at least $r-1$ shortcuts since the worst case for number of shortcuts of available shortcut edges happens when many non-terminal components separate the same (up to a difference of other non-terminal components) terminal components. And the least number of shortcuts in that case happens when on one side of the non-terminal components there is exactly one terminal vertex and the rest on the other side. When this happens, we have $r-1$ shortcuts for $r-2$ non terminal components.

\begin{figure}[ht]

\begin{center}
\includegraphics[scale = 0.9]{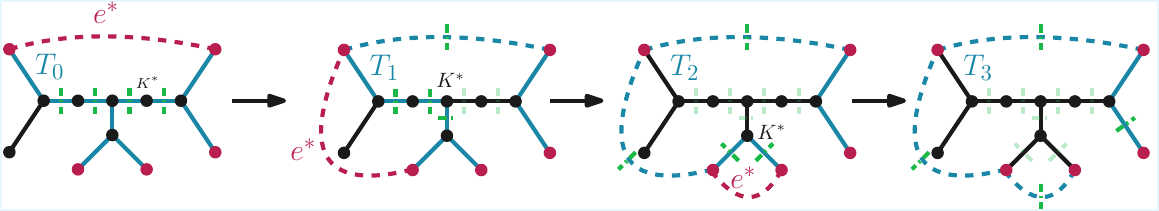}		    
\end{center}
\caption{In this figure, the successive considered trees are in blue. The cut-edges indicated in green and the shortcut used to shortcut $K^*$ is in red like the terminal vertices. Notice how from the first to the second stage, we did not make progress in number of non-terminal components but still made progress towards cutting the terminals in the component. }
\label{fig:lemma10fig}
\end{figure}
\end{proofEnd}

\Cref{lem:ILP} shows that it suffices for our algorithm to cut every Steiner tree enough times with high probability in order to yield a solution to the \ProblemName{Requirement Cut} problem.

Note that even if the $LPs$ are solved on modified graphs with added 0-cost edges, the final cut is sampled on the original graph $G$ and all terminal sets are cut in $r$ components of $G$. \Cref{lem:ILP} guarantees that cutting all Steiner trees in the modified graphs implies \textbf{satisfying the requirements in $G$}, therefore the validity and the cost of the solution remains intact and the number of Steiner trees remains the same since we can, by default, add all the 0 cost-edges to our cut if need be.

\subsection{Algorithm}
In this section we will see how our algorithm can give us a $\bigO(\log(\steinernumber{G,\Sets}))$-approximation for the \ProblemName{Requirement Cut} problem where $\steinernumber{G,\Sets}$ is the number of distinct minimal Steiner trees in the graph with respect to the given terminal sets. Computing $\steinernumber{G,\Sets}$ is $\texttt{\#}P$-Complete since it is a more general case of counting paths which is $\texttt{\#}P$-Complete \cite{valiant_complexity_1979}. However, we can upper-bound it by $\tau(G)\cdot g$ (\Cref{clm:spann-stein}) which is easy to compute through Kirchhoff's Theorem \cite{kirchhoff_ueber_1847}. When clear from the context, we shall denote the number $\steinernumber{G,\Sets}$ simply by $\sigma$.

\begin{algorithm}
\caption{Randomized Rounding for the \ProblemName{Requirement Cut} problem}
\label{alg:RRRC}
\begin{algorithmic}[1]
\State \textbf{Input:} Instance of the \ProblemName{Requirement Cut} problem on graph $G = (V, E)$
\State Solve the LP relaxation to obtain fractional values $\{d(e) \in [0,1] \mid e \in E\}$
\State Initialize cut set $\C \gets \emptyset$ 
\State Choose $\alpha = \frac{1}{c \cdot \log(\sigma)}$ with $c$ large enough (see \nameref{par:alph})
\For{each edge $e \in E$}
    \State Draw $x_e \sim \text{Uniform}(0,\alpha)$
    \If{$x_e \leq d(e)$}
        \State Add $e$ to $C$
    \EndIf
\EndFor
\State \textbf{Output:} Cut set $C$
\end{algorithmic}
\end{algorithm}
The rest of this section of the paper will be dedicated to proving the following theorem:
\begin{theorem}
\label{thm:main}
Let $I$ be a \ProblemName{Requirement Cut} Instance (\Cref{def:RCI}) with $\sigma$ being the number of minimal Steiner trees in $I$. Then, there is an LP rounding algorithm for the \ProblemName{Requirement Cut} problem that returns a solution of cost $\bigO(\log(\sigma)) \cdot LP_{opt}$ with probability $1 - \frac{1}{poly(\sigma)}$, where $LP_{opt}$ is the value returned by \RLP{}.
\end{theorem}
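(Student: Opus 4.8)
The plan is to establish two properties of the random cut $\C$ produced by \Cref{alg:RRRC}: that it is feasible except with probability $1/poly(\sigma)$, and that its cost is $\bigO(\log\sigma)\cdot LP_{opt}$. Throughout, write $d = \min(2d^*,1)$ for the scaled metric the algorithm rounds, and observe that each edge $e$ enters $\C$ \emph{independently} with probability $p_e = \prob[x_e \le d(e)] = \min(d(e)/\alpha,\,1)$, where $\alpha = 1/(c\log\sigma)$. By \Cref{lem:ILP} it suffices for feasibility to guarantee that every minimal Steiner tree of the instance is cut at least $r_i-1$ times; this is precisely where the bound $\sigma$ on the number of such trees enters, through a union bound.

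For feasibility I would fix one minimal Steiner tree $T$ of group $S_i$ and bound the probability that fewer than $r_i-1$ of its edges land in $\C$. By the preceding claim of \cite{nagarajan_approximation_2010}, $d(T)=\sum_{e\in T} d(e) \ge r_i-1$. Split $T$ into its \emph{heavy} edges $\{e : d(e)\ge\alpha\}$, which are cut deterministically ($p_e=1$), and its \emph{light} edges, cut independently with probability $d(e)/\alpha$. If the heavy edges already number at least $r_i-1$, the tree is cut enough with certainty; otherwise let $m\ge 1$ be the residual requirement. Since each $d(e)\le 1$, the heavy edges carry $d$-mass at most their count, so the light edges carry $d$-mass at least $m$, and the expected number of light cuts is at least $m/\alpha = mc\log\sigma$. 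A multiplicative Chernoff bound then gives that the number of light cuts falls below $m$ with probability at most $\sigma^{-\Omega(c)}$. A union bound over all $\sigma$ minimal Steiner trees, with $c$ chosen large enough, drives the total failure probability below $1/poly(\sigma)$, and \Cref{lem:ILP} converts "every minimal Steiner tree cut $\ge r_i-1$ times" into feasibility of the original instance.

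For the cost, linearity of expectation gives $\Expected[\mathrm{cost}(\C)] = \sum_e c_e p_e \le \tfrac{1}{\alpha}\sum_e c_e d(e) \le \tfrac{2}{\alpha}\,LP_{opt} = \bigO(\log\sigma)\cdot LP_{opt}$, using $d\le 2d^*$ and $\alpha = 1/(c\log\sigma)$. Markov's inequality then bounds the cost by a constant multiple of this with constant probability. To obtain feasibility \emph{and} the cost bound with probability $1-1/poly(\sigma)$ simultaneously, I would run the rounding $\Theta(\log\sigma)$ independent times and return the cheapest cut verified feasible (feasibility is checkable in polynomial time by counting components of each $S_i$): each run is both feasible and cheap with constant probability, so the amplified failure probability is $1/poly(\sigma)$.

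The main obstacle I anticipate is the feasibility analysis—specifically, making the per-tree Chernoff bound robust enough to survive the union bound over all $\sigma$ trees in every regime. The delicate points are the bookkeeping for heavy (deterministically cut) edges, which reduce but may not eliminate the residual requirement $m$, and verifying that even when $m=1$ the tail $\prod_{e\ \mathrm{light}}(1-d(e)/\alpha)\le e^{-m/\alpha}=\sigma^{-\Omega(c)}$ is small enough that $\sigma\cdot\sigma^{-\Omega(c)}=1/poly(\sigma)$. Fixing the constant $c$ so this holds uniformly, consistently with the choice $\alpha=1/(c\log\sigma)$, is the crux of the argument.
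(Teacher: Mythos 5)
Your proposal follows essentially the same route as the paper: split the cut into deterministically-cut heavy edges ($d(e)\ge\alpha$) and independently-rounded light edges, bound the residual requirement of each minimal Steiner tree via a Chernoff bound, union-bound over the $\sigma$ trees, and invoke \Cref{lem:ILP} to convert "every Steiner tree is cut $\ge r_i-1$ times" into feasibility, with the cost handled by linearity of expectation. The one place you go beyond the paper is the final amplification step (Markov plus $\Theta(\log\sigma)$ independent repetitions, returning the cheapest verified-feasible cut): the paper only bounds the \emph{expected} cost and treats that as sufficient for the stated high-probability cost guarantee, so your extra step actually closes a small gap in the paper's own argument rather than diverging from it.
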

The algorithm provides an $\bigO(\log(n))$-factor approximation for all graphs with $\tau(G) = \bigO(n^c)$ (see \Cref{sec:GraphsBST} for examples). It also matches the $\bigO(\log(g))$-approximation on trees given by \cite{gupta_improved_2010}, while having a simpler process, as it only involves one of their two phases.

\begin{theoremEnd}[normal]{lemma}[]
The expected cost of $C$, the set of edges returned by \Cref{alg:RRRC} is $\Expected(c(C)) \leq \frac{1}{\alpha}\sum_{e \in E} c_e d_e = \bigO(\log(\sigma))\cdot LP_{opt}$.  
\end{theoremEnd}

\begin{proofEnd} 
The probability of a given edge $e$ being in the cut $C$ is $\mathbb{P}(e \in C) = \min\left\{1, \frac{d(e)}{\alpha} \right\}$. Thus, the expected cost of $C$ is $\Expected(c(C)) \leq \frac{1}{\alpha}\sum_{e \in E} c_e d_e = \bigO(\log(\sigma))\cdot LP_{opt}$ by linearity of expectations.
\end{proofEnd}

\begin{theoremEnd}[normal]{lemma}[$\star$]
\label{lem:anygroupsat}
After the execution of \Cref{alg:RRRC} the probability that a group $S_i$ lies in less than $r_i$ is $\frac{1}{\sigma^{\frac{c}{2}-1}}$ with $c \geq 4$ a constant chosen arbitrarily. 
\end{theoremEnd}

\begin{proofEnd}
Let $\alpha = \frac{1}{c \cdot \log(\sigma)}$ with $c$ a constant chosen large enough (\nameref{par:alph}) and define $x_e \sim \text{Uniform}(0, \alpha)$ picked independently for each $e \in E$. Let $d: E \to [0,1]$ be the solution returned by \RLP{}, and let $\iTrees$ be the set of all Steiner trees of group $g_i$ in $G$. We call a terminal set $i$ that lies in at least $r_i$ components \textit{satisfied}. By LP feasibility, we have:
\begin{equation}
\label{eqa:stree}
\sum_{e \in t} d(e) \geq r_i - 1 , \quad \forall t \in \iTrees    
\end{equation}

We define an indicator variable $Z_e$ for each edge $e \in E$ as follows:
\[
Z_e =
\begin{cases} 
0, & \text{if } x_e > d_e  , \\
1, & \text{otherwise}.
\end{cases}
\]

Next, define a random variable $Y_t$ for any Steiner tree $t \in \iTrees$:
\[
Y_t = \sum_{e \in t} Z_e.
\]

\begin{claim}
\label{clm:enoughcut}
For all $t \in \iTrees$, we have $\Expected[Y_t] \geq r_i - 1$. In other words, the expected number of edges cut in each Steiner tree is at least $r_i - 1$.
\end{claim}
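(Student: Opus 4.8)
The plan is to reduce the claim to a one-line application of linearity of expectation, after a short per-edge probability computation. No tail bound or concentration is needed here, since the statement is purely about the expectation $\Expected[Y_t]$.

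First I would compute the marginal probability that a single edge $e$ is selected, i.e.\ that $Z_e = 1$. Since $x_e$ is drawn uniformly from $(0,\alpha)$, independently across edges, we have $\prob(Z_e = 1) = \prob(x_e \leq d_e) = \min\{1, d_e/\alpha\}$: the event is certain when $d_e \geq \alpha$, and otherwise occurs with probability $d_e/\alpha$. Second, I would lower bound this marginal by $d_e$ itself. Because $\alpha = \frac{1}{c \cdot \log(\sigma)} \leq 1$ (for $\sigma \geq 2$ and $c \geq 4$), we get $d_e/\alpha \geq d_e$, and combined with $d_e \leq 1$ this yields $\min\{1, d_e/\alpha\} \geq d_e$. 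Hence $\Expected[Z_e] = \prob(Z_e = 1) \geq d_e$ for every edge.

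Finally I would sum over the edges of a fixed Steiner tree $t \in \iTrees$ and invoke the LP feasibility constraint \eqref{eqa:stree}:
\[
\Expected[Y_t] = \sum_{e \in t} \Expected[Z_e] \geq \sum_{e \in t} d_e \geq r_i - 1,
\]
where the last inequality is exactly the guarantee that every $S_i$-Steiner tree has $d$-length at least $r_i - 1$. This closes the claim.

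I do not expect a genuine obstacle: the only step that uses any property of $\alpha$ is the bound $\min\{1, d_e/\alpha\} \geq d_e$, which needs merely $\alpha \leq 1$, and this is precisely what the choice of $\alpha$ guarantees. The conceptual point is that narrowing the sampling interval to $(0,\alpha)$ rather than $(0,1)$ only \emph{increases} each edge's selection probability relative to its fractional value $d_e$, so the per-tree expectation is preserved. The genuinely delicate work — upgrading this expectation bound into a high-probability statement that $Y_t \geq r_i - 1$ holds simultaneously for all $\sigma$ minimal Steiner trees, as needed for \Cref{lem:anygroupsat} — relies on a concentration inequality and a union bound over the trees, and lies outside the scope of this particular claim.
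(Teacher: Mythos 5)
Your proposal is correct and follows essentially the same route as the paper's proof: compute $\prob(Z_e=1)=\min\{1,d_e/\alpha\}$, lower bound it by $d_e$, and conclude by linearity of expectation together with the LP feasibility constraint. The only difference is that you spell out explicitly why $\min\{1,d_e/\alpha\}\geq d_e$ (namely $\alpha\leq 1$), a step the paper leaves implicit.
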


\begin{proof}
We rely on \cref{eqa:stree} and the probability that any edge $e$ is cut is given by
\[
\mathbb{P}(Z_e = 1) = \min\left\{1, \frac{d(e)}{\alpha} \right\}
\]
Taking expectations, we obtain:
\[
\Expected[Z_e] \geq d(e).
\]
Summing over all edges in $t$, it follows that
\[
\Expected[Y_t] = \sum_{e \in t} \Expected[Z_e] \geq \sum_{e \in t} d(e) \geq r_i - 1.
\]
This completes the proof of this Claim.
\end{proof}

Now that we have established that each Steiner tree is expected to be cut at least $r_i - 1$ times, we analyze the probability that every $S_i$-Steiner tree is cut $r_i - 1$ times. Proving that our algorithm returns a solution using \Cref{lem:ILP}.

For this analysis, we shall first do a few adjustments into our instance: 

\begin{enumerate}
    \item Separate into two sets the edges of our cut $\C = \C_1 \uplus \C_2$. On the one hand, $\C_1$ is the set of edges such that $d(e) \geq \alpha \iff \mathbb{P}(Z_e = 1) = 1$. $\C_2$, on the other hand, is the set of edges for which $d(e) < \alpha$, meaning they had a probability $\mathbb{P}(Z_e = 1) = \frac{d(e)}{\alpha}$ of being picked in the cut.
    \item For each set $S_i$, after the removal of the edges of $\C_1$, some of its requirement might be satisfied. So we call $r_i'$ the residual requirement after the removal of the edges in $\C_1$. We have that $r_i' \leq r_i$. 
\end{enumerate}

The goal of the next steps of the analysis will be to show that the edges in $\C_2$ cover the residual requirements $r_i'$ left after the removal of the edges in $\C_1$ with high probability.

Since we are now only looking at the new graph $G = (V, E\setminus C_1)$, let us update some of our bounds from before. 

\[\forall e \in E\setminus C_1, \mathbb{P}(Z_e = 1) = \Expected[Z_e] = \frac{d(e)}{\alpha}\]

By linearity of expectations we get : $\forall i, \forall t \in \iTrees, \Expected[Y_t] = \sum_{e \in t} \Expected[Z_e] \geq \sum_{e \in t} \frac{d(e)}{\alpha} \geq \frac{r'_i - 1}{\alpha}$. Now, let us bound the probability of one Steiner tree not being satisfied using a Chernoff bound. We shall drop the indices on the variables since the calculations hold for every Steiner tree $t$ in $G = (V, E\setminus C_1)$.

We use the following Chernoff bound:

\begin{equation*}
\
\mathbb{P}\left(Y_t < (1 - \delta) \Expected[Y_t]\right) \leq e^{-\frac{\delta^2}{2} \Expected[Y_t]}, \quad \text{for } 0 < \delta < 1.
\label{eq:chernoff}
\end{equation*}

Substituting $\delta = 1 - \frac{r'-2}{\Expected[Y_t]}$, we derive:
\[
\mathbb{P}(Y_t \leq r'-2) \leq e^{-\frac{(1 - \frac{r'-2}{\Expected[Y_t]})  ^2}{2} \Expected[Y_t]}.
\]
Expanding the exponent:
\[
1 - \frac{r'-2}{\Expected[Y_t]} = \frac{\Expected[Y_t] - r' - 2}{\Expected[Y_t]},
\]
\[
(1 - \frac{r'-2}{\Expected[Y_t]})^2 = \frac{(\Expected[Y_t] - r' - 2)^2}{\Expected[Y_t]^2}.
\]
Substituting into the exponent:
\[
\exp({-\frac{(\Expected[Y_t] - r' - 2)^2}{2 \Expected[Y_t]}})
\]

Using the lower bound on $\Expected[Y_t]$:
\[
\Expected[Y_t] \geq \frac{r' - 1}{\alpha}
\]

Using this bound to upper bound the original expression:

Note: This upper-bounding is possible because the function $f(x) = \exp\left(-\frac{(x - r' - 2)^2}{2 x}\right)$ is decreasing for any $x \geq r'-2$ which is the case for our case.
\[
\exp\left(-\frac{(\Expected[Y_t] - r' - 2)^2}{2 \Expected[Y_t]}\right)
\leq
\exp\left(-\frac{\left(\frac{r' - 1}{\alpha} - r' - 2\right)^2}{2 \cdot \frac{r' - 1}{\alpha}}\right)
\]

Simplifying the numerator:
\[
\frac{r' - 1}{\alpha} - r' - 2
= \frac{r' - 1 - \alpha(r' + 2)}{\alpha}
\]

Squaring the numerator:
\[
\left(\frac{r' - 1 - \alpha(r' + 2)}{\alpha}\right)^2
= \frac{(r' - 1 - \alpha(r' + 2))^2}{\alpha^2}
\]

Putting this into the full expression:
\[
\exp\left(-\frac{(r' - 1 - \alpha(r' + 2))^2}{2 \cdot \frac{r' - 1}{\alpha} \cdot \alpha^2}\right)
= \exp\left(- \frac{(r' - 1 - \alpha(r' + 2))^2}{2 \alpha (r' - 1)}\right)
\]

Finally, we know that $r' \geq 2$ and we want to substitute. However, we have to make sure that this correctly bounds our term. Similarly to before this upper bounding is indeed possible because $f_{\alpha}(x) = \exp\left(- \frac{(x - 1 - \alpha(x + 2))^2}{2 \alpha(x - 1)}\right)$ admits only one critical points for $x \geq 2$  and it is $x^* = \frac{1 + 2\alpha}{1 - \alpha}$. The function increases before that and decreases after. We want it to be decreasing for $x \geq 2$ and so we have to restrict $\alpha$ to $0<\alpha\leq 0.25$ and $x \geq 2$. This is no problem since we can decide on $c$ and make $\alpha$ small enough.
\[
\exp\left(- \frac{(r' - 1 - \alpha(r' + 2))^2}{2 \alpha (r' - 1)}\right) \leq \exp\left(- \frac{(1 - 4\alpha)^2}{2 \alpha}\right)
\]
\[
\leq \exp\left(- \frac{16\alpha^2 - 8\alpha + 1}{2 \alpha}\right)
\]

Substitute $\alpha$ for $\frac{1}{c\cdot \log(\sigma)}$:

\[
 \exp\left(- \frac{\frac{16}{(c\cdot \log(\sigma))^2} - \frac{8}{c\cdot \log(\sigma)} + 1}{\frac{2}{c\cdot \log(\sigma)}}\right) =  \exp\left(- \frac{\frac{16}{(c\cdot \log(\sigma))} - 8 + c\cdot \log(\sigma)}{2}\right)
\]
Here we simply distribute the division by two and then remove a negative term.
\[
 =  \exp\left( 4 - \frac{8}{c\cdot \log(\sigma)} - \frac{c\cdot \log(\sigma)}{2}\right) \leq  \exp\left( 4 - \frac{c\cdot \log(\sigma)}{2}\right)
\]

\[
 =  \frac{\exp(4)}{\sigma^{\frac{c}{2}}}
\]

Then the probability of $C$, the output of the algorithm, \textbf{not} being a solution is at most:
\[
\sum_{S_i \in \Sets}\sum_{e \in \iTrees} \frac{\exp(4)}{\sigma^{\frac{c}{2}}} = \frac{\exp(4)\cdot\sigma}{\sigma^{\frac{c}{2}}} = \frac{\exp(4)}{\sigma^{\frac{c}{2}-1}}
\]
Under the condition that $\sigma = poly(n)$, this means that $\C{}$ is a solution with high probability i.e. $1-\frac{1}{poly(n)}$.

This concludes the proof of \Cref{lem:anygroupsat}
\end{proofEnd}

This also completes the proof of \Cref{thm:main}.

\paragraph*{\textbf{Choosing alpha}} \label{par:alph}
Let us now discuss the details of fixing $\alpha = \frac{1}{c \cdot \log(\sigma)}$. For some part of our analysis (see long version of proof of \Cref{lem:anygroupsat}) we need $c \geq 4$ assuming $\log(n) \geq 1$.
Technically speaking, one could have a $\log(\sigma)$-approximation even for instances where $\sigma$ is not $poly(n)$. But this would not lead to a ``single-log'' approximation, as the factor is only $\bigO(\log(n))$ when $\sigma = poly(n)$ or smaller.

\section{Depth of Series Parallel Graphs}
\label{sec:depthspg}
In this section we refine the approximation bounds for the \ProblemName{Requirement Cut} problem on series-parallel graphs by reanalyzing the embedding algorithm of Emek and Peleg~\cite{emek_tight_2010}. Combined with the $\bigO(\log g)$-approximation for trees~\cite{gupta_improved_2010}, this yields an $\bigO(m \log g)$-approximation on series-parallel graphs of depth $m$ (as in \Cref{def:depth}). 

Series-parallel graphs in general cannot be embedded into trees with expected distortion $o(\log n)$~\cite{gupta_cuts_2004}. Hence, for unrestricted series-parallel graphs, the best known ratio for \ProblemName{Requirement Cut} matches the general case, namely $\bigO(\log g \cdot \log n)$ via tree embeddings~\cite{nagarajan_approximation_2010,gupta_improved_2010}. The lower-bound construction of Gupta et al.~\cite{gupta_cuts_2004} in fact has depth $\Theta(\log n)$, so our result gives a strictly better guarantee whenever the depth $m$ is $o(\log n)$ (and in particular when $m$ is constant).
\begin{figure}[ht]
\small
	\begin{center}
		\begin{tikzpicture}
	   \node at (0,0){\includegraphics[scale = 0.8]{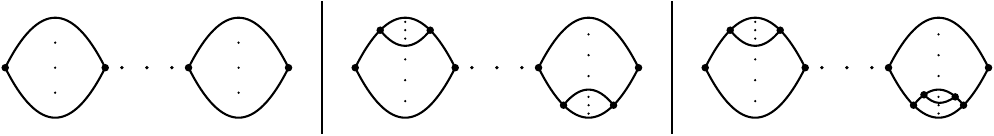}};
        \node at (-4.65,-1.1){depth $ = 3$};
        \node at (0,-1.1){depth $ = 5$};
        \node at (4.65,-1.1){depth $ = 7$};
		\end{tikzpicture}  
	\end{center}
	\caption{Series-parallel graphs of \depth 3, 5, and 7. These depths correspond to the number of nested compositions in the graph's construction and form an infinite subclass of graphs with bounded tree-depth. Note that the number of paths in those graphs can be exponential. Here, each black line represents an arbitrarily long path.}
	\label{fig:boundeddepthsp}
\end{figure}
Additionally, series-parallel graphs of bounded depth are a generalization of another well known class of structures in physics known as \textit{melonic graphs} \cite{aluffi_motives_2022, baratin_melonic_2014}. Those graphs are used in modeling condensed matter and black holes through the Sachdev–Ye–Kitaev Model (SYK) and its generalization the Carrozza, Tanasa, Klebanov, and Tarnopolsky (CTKT) model. \textit{Melonic graphs} are built recursively using an operation called ``bananification'' (\cite{aluffi_motives_2022} did not invent it but uses it), this operation creates series-parallel graphs with some \depth that corresponds to the order of the \textit{melonic graph}.

The depth parameter that we consider captures exactly the depth of the tree decomposition of the given series-parallel graph. This draws a parallel to the well-studied structural parameter tree-depth~\cite{nesetril_bounded_2012}, which can be equivalently seen to capture the width of the tree decomposition together with its depth. As series-parallel graphs are of treewidth 2, series-parallel graphs of bounded depth are thus a special case of graphs of bounded tree-depth.

In 2006, Emek and Peleg \cite{emek_tight_2010} presented an algorithm that probabilistically embeds series-parallel graphs into distributions over their spanning trees with expected distortion $\bigO(\log(n))$. Their approach constructs a spanning tree recursively from the \hyperref[def:comptrace]{composition trace} of the graph, ensuring that, in expectation, distances between any pair of vertices remain within a logarithmic factor of the original graph.

We modify the analysis by focusing on the depth of the series-parallel graph. Note that the depth of a series-parallel graph can grow linearly in the worst case (it is upper bounded by $\frac{n}{2}$). However, our claim is not to have a better approximation for all series-parallel graphs but simply that for any fixed depth, all series-parallel graph of that depth have an efficient approximation. As discussed in the introduction, series-parallel graphs of depth $\bigO(\log n)$ pose a serious obstacle towards better approximation for \ProblemName{Requirement Cut} since they are as hard as general graphs to approximate.

In other words, our analysis proves that fixing the depth at a constant $m$ for series-parallel allows $\bigO(\log(g))$ for the \ProblemName{Requirement Cut} problem on this class of graphs.

Before giving our theorem and proving it, it is good to give some context regarding Emek and Peleg's algorithm  \textit{Construct\_Tree}. It takes as input a series-parallel graph $G$ with two terminal vertices $x$ and $y$ and returns a spanning tree $T$ of $G$. The goal is to minimize the expected distortion between the shortest distances in $G$ and in $T$.

The algorithm operates recursively, using the composition tree (or decomposition trace) of $G$, which reflects how the graph was built from smaller series-parallel components via \emph{series} and \emph{parallel} composition. It handles these two cases differently:

\begin{itemize}
  \item \textbf{Series composition:} If $G$ is obtained by gluing $G_1, \ldots, G_k$ in series, the algorithm recursively constructs a spanning tree $T_j$ for each $G_j$, then glues $T_1, \ldots, T_k$ in series to form $T$. This does not increase distortion for any pair $(u,v)$ contained in any $G_i$.

  \item \textbf{Parallel composition:} If $G$ is obtained by taking $G_1, \ldots, G_k$ that each connect $x$ to $y$, the algorithm first builds a spanning tree $T_j$ for each $G_j$. It identifies a shortest $(x,y)$-path $P_{k^*}$ among them, and keeps $T_{k^*}$ untouched. For the other $T_j$ (with $j \neq k^*$), it removes one edge chosen uniformly at random from the path $P_j$ connecting $x$ and  $y$ in $T_j$. This breaks the cycles and ensures the union of all modified trees still forms a spanning tree of $G$. This step might increase the distortion for some $(u,v)$ contained in any $G_i$. 
\end{itemize}

The number of compositions is bounded by $m$ since it is the maximum number of compositions we do. To measure the distortion, Emek and Peleg introduce several variables: $\{a_i\}_{i=1}^m$, $\{b_i\}_{i=1}^m$ and $d$ positive integers satisfying:

\noindent
\begin{minipage}[t]{0.48\textwidth}
\vspace{0pt}
    \begin{itemize}
    \item $b_1 = 1$,
    \item $b_{i+1} \leq a_i + b_i$ for all $1 \leq i < m$,
    \item $c_i = \sum_{j=1}^{i-1} a_i$,
\end{itemize}
\end{minipage}%
\hfill
\begin{minipage}[t]{0.48\textwidth}
\vspace{0pt}
    \begin{itemize}
    \item $d \leq a_m + b_m$,
    \item $\sum_{i=1}^m a_i \leq n$.
\end{itemize}
\end{minipage}

Intuitively, these variables each mean something different in the construction with respect to a pair of vertices $u,v \in G_j$, a graph that does not contain $P_k$ the shortest $(x,y)$-path. During a parallel composition operation, $d(u,v)$ might increase if an edge in the shortest $(u,v)$-path is removed:
\begin{itemize}
    \item $b_i$ is the length of the overlap between the shortest $(u,v)$-path and the $(x,y)$-path in $G_j$, $P_j$. This is the size of the ''risk`` portion of $P_j$.
    \item $a_i$ is the length of the rest of $P_j$.
    \item $c_i$ is $d(u,v) - b_i$, it is what will remain of the $d(u,v)$ no matter what.
    \item $d$ is simply $d(x,y)$.
\end{itemize}

We will now show how to bound the expected distortion of Emek and Peleg's algorithm by an $\bigO(m)$ factor instead of a $\bigO(\log(n))$. 

\begin{theoremEnd}[normal]{theorem}[]
Let $\mathcal{T}$ be the output of Algorithm \textit{Construct\_Tree} \cite{emek_tight_2010} when invoked on a series-parallel graph G with $m$ the depth of the composition trace of $G$. Then $\Expected[dist_{\mathcal{T}}(u, v)] \leq \bigO(m), \forall (u, v) \in E(G)$. 
\end{theoremEnd}
\begin{proofEnd} 

Let $\Expected[dist_{\mathcal{T}}(u, v)]$, be the distortion between $u,v \in V$ as defined in \cite[Lemma 4.1]{emek_tight_2010} :
$$\Expected[dist_{\mathcal{T}}(u, v)] =
\sum\limits_{i=1}^{m} \frac{a_i}{a_i + b_i}
\left( \prod\limits_{j=1}^{i-1} \frac{b_j}{a_j + b_j} \right) (b_i + c_i)
+ \left( \prod\limits_{i=1}^{m} \frac{b_i}{a_i + b_i} \right) (a_m + c_m + d_m)$$

In their paper \cite{emek_tight_2010}, Emek and Peleg show that $\Expected[dist_{\mathcal{T}}(u, v)] \leq \bigO(\log n)$ (Lemma 4.2), meaning that the expected distortion of each edge is at most logarithmic in $n$. That is:

$$\Expected[dist_{\mathcal{T}}(u, v)] \leq \bigO(\log n) \quad \text{for all } (u,v) \in E(G)$$

We adopt the same notations as them and with a different analysis obtain a different result depending on $m$: $\Expected[dist_{\mathcal{T}}(u, v)] \leq \bigO(m), \forall (u, v)\in E(G)$

We first rewrite it using $b_{i+1} \leq b_i+a_i$ which implies that $b_i \leq 1 + \sum_{j=1}^{i-1} a_i$

$$ \Expected[dist_{\mathcal{T}}(u, v)] \leq
\frac{a_1}{a_1 + 1} + \sum\limits_{i=2}^{m} \frac{a_i}{1+\sum_{j=1}^{i} a_j} \left(\frac{1}{1 + \sum_{j=1}^{i-1} a_j} \right) (1 + \sum\limits_{j=1}^{i-1} a_j + c_i)
+ \left(\frac{(a_m + c_m + d_m)}{1 + \sum_{i=1}^{m} a_i} \right)
$$

We replace $c_i$, using $c_i = a_1 + ... + a_{i-1}$

$$ =
\frac{a_1}{a_1 + 1} + \sum\limits_{i=2}^{m} \frac{a_i}{1+\sum_{j=1}^{i} a_j} \left(\frac{1 + 2\cdot \sum_{j=1}^{i-1} a_j}{1 + \sum_{j=1}^{i-1} a_j} \right)
+ \left(\frac{d_m + \sum_{i=1}^{m} a_i}{1 + \sum_{i=1}^{m} a_i} \right)
$$

And now we upper bound $d_m$ by using $d_m \leq 1 + \sum_{i=1}^{m} a_i$

$$ \leq
\frac{a_1}{a_1 + 1} + \sum\limits_{i=2}^{m} \frac{a_i}{1+\sum_{j=1}^{i} a_j} \left(\frac{1 + 2\cdot \sum_{j=1}^{i-1} a_j}{1 + \sum_{j=1}^{i-1} a_j} \right)
+ \left(\frac{1 + 2 \cdot \sum_{i=1}^{m} a_i}{1 + \sum_{i=1}^{m} a_i} \right)
$$

The first sum has $m$ terms each smaller than 1 and each of them are multiplied by a quantity smaller than 2. The rightmost term is at most 2.

$$ \Expected[dist_{\mathcal{T}}(u, v)] \leq
2m + 2
$$

This series of calculations show that for a given \depth, all series-parallel graphs of that \emph{depth} can be embedded into a distribution of spanning trees that give a constant expected distortion for any pair of vertices $(u,v)$. This fact is interesting on its own theoretically and in our case, we can use it to get an $\bigO(\log(g))$-approximation for the \ProblemName{Requirement Cut} problem on series-parallel graphs of bounded \depth.
\end{proofEnd}

\section{Conclusion}
\label{sec:conclusion}
The \ProblemName{Requirement Cut} problem generalizes many classical cut problems, and has resisted single-log approximations on general graphs due to unavoidable lower bounds like the cost of tree embeddings. In this work, we identify two structural parameters—\emph{the number of (minimal) Steiner trees} and the \emph{depth of series-parallel graphs}—allowing us to break the ``double-log'' barrier.

Our main takeaway is that graphs with a polynomial number of spanning trees, or more generally instances with a polynomial number of minimal Steiner trees, admit efficient LP-rounding algorithms with approximation ratio $\bigO(\log(n))$. Additionally series-parallel graphs of bounded depth can be approximated in $\bigO(\log(g))$, extending known results for trees to a broader class.

We believe our work represents a natural limit of improvement (up to constant factors) for both our methods. On one hand, $LP$ rounding techniques based on randomized thresholding have been extensively studied and our improvement implicitly defines and calls for graphs of polynomial number of spanning trees which have not been studied before. On the other hand our linear bound in the depth $m$ of series-parallel graphs is likely optimal up to constant factors, since the series-parallel graphs constructed in \cite{gupta_cuts_2004} which exhibit distortion $\Omega(\log n)$ when embedded into trees have depth $\Theta(\log n)$ matching the $\log(n)\cdot \log(g)$-approximation \cite{nagarajan_approximation_2010, gupta_improved_2010} on these graphs. Therefore, any future progress may require fundamentally different approaches. 

Nonetheless, our results suggest that graphs with polynomially many spanning trees and bounded-depth constructions behave in a ``tree-like'' manner and allow for single-log approximations for the \ProblemName{Requirement Cut} problem, opening the door to improved approximation algorithms for other connectivity problems in such graphs. 
\newpage

\bibliographystyle{splncs_srt}

\bibliography{references}

\end{document}